\definecolor{mygray}{RGB}{153,153,153}
\definecolor{myred}{RGB}{221,77,57}
\definecolor{myblue}{RGB}{68,104,252}
\definecolor{mygreen}{RGB}{86,149,22}
\newcommand*{\X}{\ensuremath{X_{1 \ldots n}\xspace}}
\newcommand*{\bin}[1]{\texttt{#1}}
\newcommand*{\Ialpha}{\ensuremath{\opname{I}_{\alpha} \xspace}}
\newcommand*{\Ialphae}[2]{\ensuremath{\opname{I}_{\alpha}\!\left(\{ #1 \} : #2 \right)}}
\author[1]{Virgil Griffith}
\author[2]{Tracey Ho}
\affil[1]{Computation and Neural Systems, Caltech, Pasadena, CA 91125}
\affil[2]{Computer Science and Electrical Engineering, Caltech, Pasadena, CA 91125}
\title{Quantifying Redundant Information in Predicting a Target Random Variable}
\newtheorem{lem}{Lemma}
\newcommand{\Szero}{$\mathbf{\left(S_0\right)}$\xspace}
\newcommand{\Mzero}{$\mathbf{\left(M_0\right)}$\xspace}
\newcommand{\Mone}{$\mathbf{\left(M_1\right)}$\xspace}
\renewcommand*{\vee}{ {\, , \;} }
\renewcommand*{\vee}{ , }
\begin{document}

\maketitle

\begin{abstract}
This paper considers the problem of defining a measure of redundant information  that quantifies how much common information two or more random variables specify about a target random variable.  We discussed desired properties of such a measure, and propose new measures with some desirable properties.
\end{abstract}

\section{Introduction}

Many molecular and neurological systems involve multiple interacting factors affecting an outcome synergistically and/or redundantly. Attempts to shed light on issues such as population coding in neurons, or genetic contribution to a phenotype (e.g. eye-color), have motivated various proposals to leverage principled information-theoretic measures for quantifying informational synergy and redundancy, e.g.~\cite{berry03,narayanan05,balduzzi-tononi-08,anas2007,lizier13}. In these settings, we are concerned with the statistics of how two (or more) random variables $X_1,X_2$, called predictors, jointly or separately specify/predict another random variable $Y$, called a target random variable. This focus on a target random variable is in contrast to
Shannon’s mutual information which quantifies statistical dependence between two random variables, and various notions of common information, e.g.~\cite{gacs73,Wyner75,KumarLG14}.

The concepts of synergy and redundancy are based on several intuitive notions, e.g., positive informational synergy indicates that $X_1$ and $X_2$ act cooperatively or  antagonistically to influence $Y$; positive redundancy indicates there is an aspect of $Y$ that $X_1$ and $X_2$ can each separately predict. However, it has proven challenging\cite{qsmi,polani12,bertschinger12,Iwedge} to come up with precise information-theoretic definitions of synergy and redundancy that are consistent with all intuitively desired properties.

\section{Background: Partial Information Decomposition}
The \emph{Partial Information Decomposition} (PID) approach of \cite{plw-10} defines the concepts of synergistic, redundant and unique information in terms of  \emph{intersection information}, $\Icape{X_1, \ldots, X_n}{Y}$, which quantifies the common information that each of the $n$ predictors
$X_1, \ldots, X_n$ conveys about a target random variable $Y$. An antichain lattice of redundant, unique, and synergistic partial informations is built from the intersection information.

Partial information diagrams (PI-diagrams) extend Venn diagrams to represent synergy.  
A PI-diagram is composed of nonnegative \emph{partial information regions} (PI-regions).  Unlike the standard Venn entropy diagram in which the sum of all regions is the joint entropy $\ent{\X, Y}$, in PI-diagrams the sum of all regions (i.e. the space of the PI-diagram) is the mutual information $\info{\X}{Y}$.  PI-diagrams show how the mutual information $\info{\X}{Y}$ is distributed across subsets of the predictors.
For example, in the PI-diagram for $n=2$ (\figref{fig:tutorial}): $\{1\}$ denotes the unique information about $Y$ that only $X_1$ carries (likewise \{2\} denotes the information only $X_2$ carries);  $\{1,2\}$ denotes the redundant information about $Y$ that $X_1$ as well as $X_2$ carries, while $\{12\}$ denotes the information about $Y$ that is specified only by  $X_1$ and $X_2$ synergistically or jointly.

\begin{figure}[h!bt]
	\centering
	\subfloat[$\info{X_1}{Y}$]{ \includegraphics[width=0.86in]{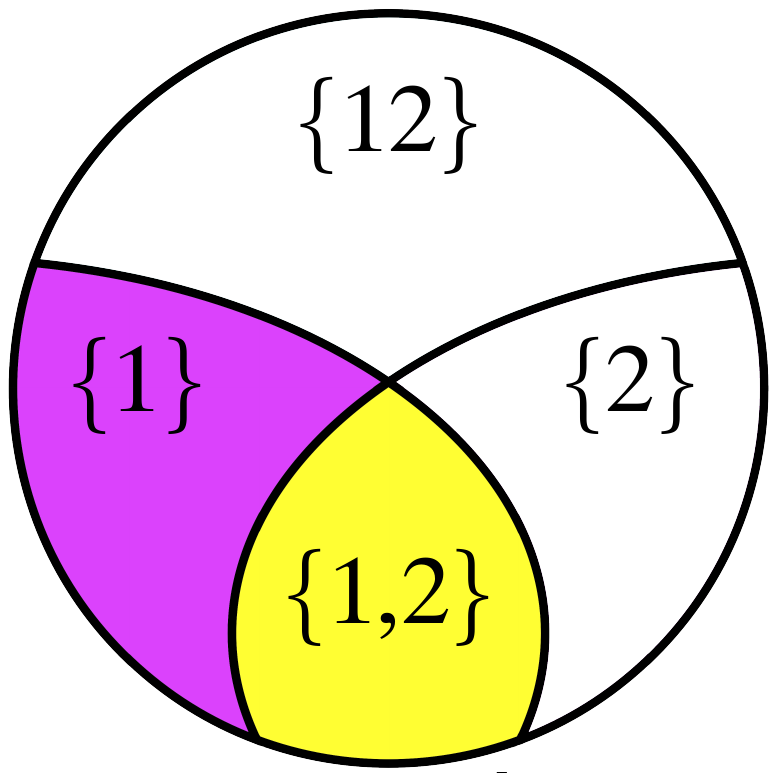} \label{fig:tutorial_a} }
	\subfloat[$\info{X_2}{Y}$]{ \includegraphics[width=0.86in]{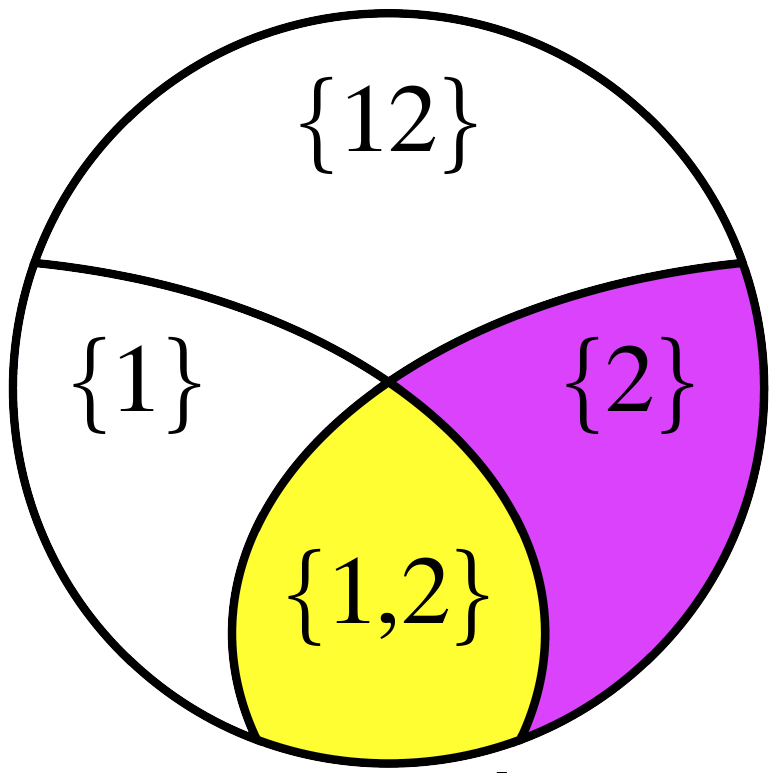} \label{fig:tutorial_b} }
	
	\subfloat[$\info{X_1}{Y|X_2}$]{ \includegraphics[width=0.86in]{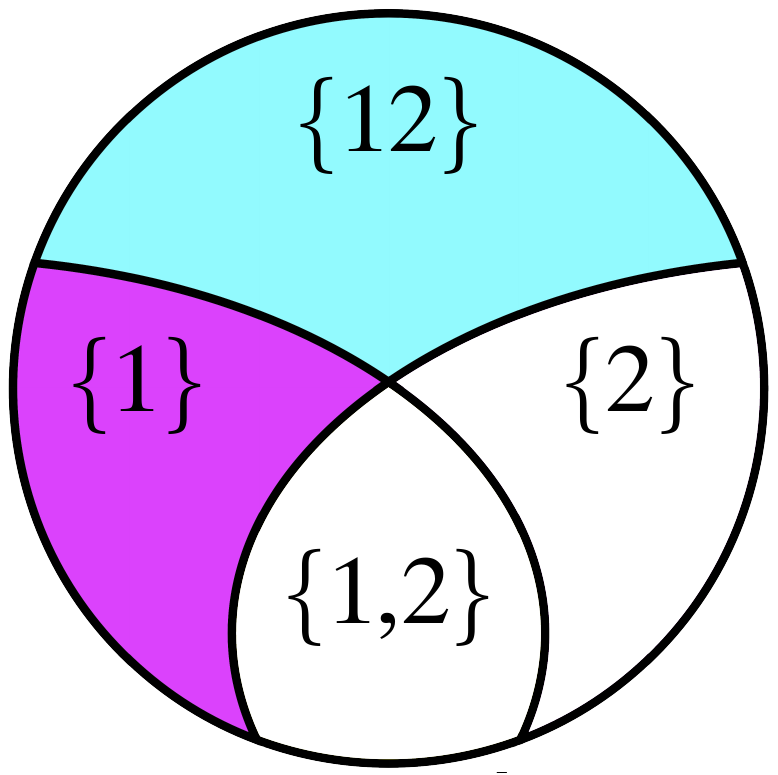} \label{fig:tutorial_c} }
	\subfloat[$\info{X_2}{Y|X_1}$]{ \includegraphics[width=0.86in]{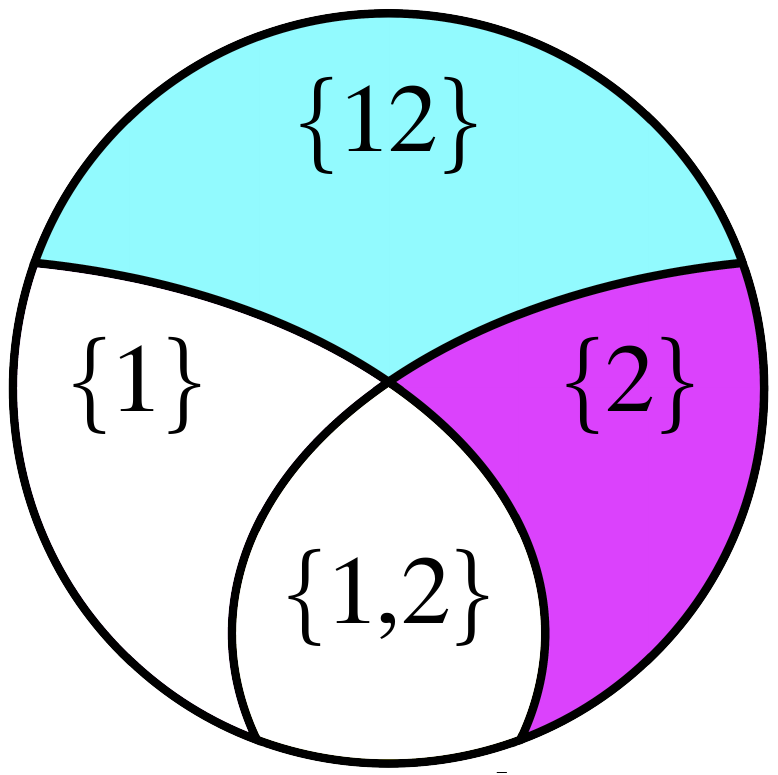} \label{fig:tutorial_d} }
	\subfloat[$\info{X_1 X_2}{Y}$]{ \includegraphics[width=0.86in]{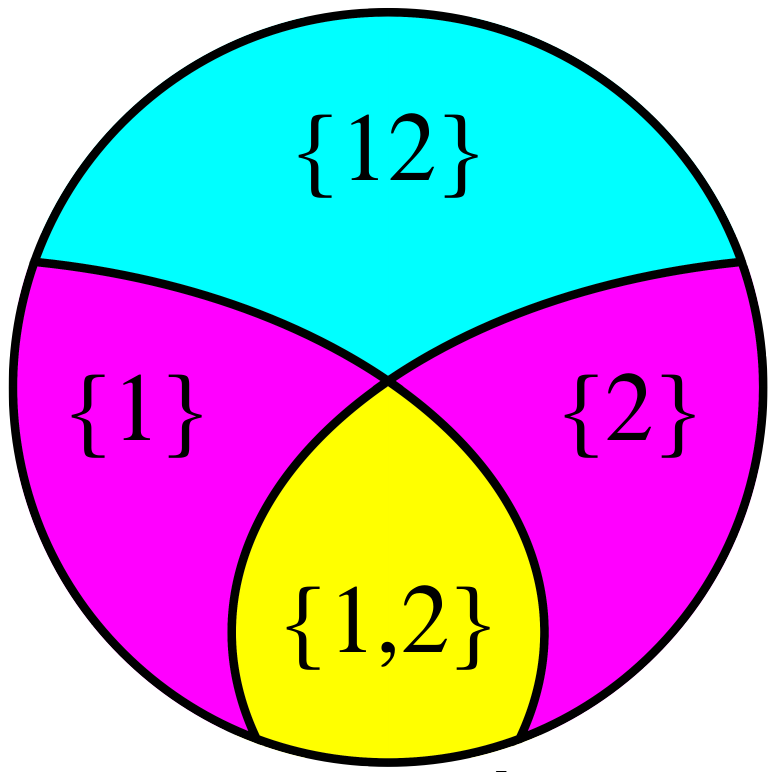} \label{fig:tutorial_e} }
		\caption{PI-diagrams for $n=2$ predictors, showing the amount of redundant (yellow/bottom), unique (magenta/left and right) and synergistic (cyan/top) information with respect to the target $Y$.}
	\label{fig:tutorial}
\end{figure}

Each PI-region is either redundant, unique, or synergistic, but any combination of positive PI-regions may be possible.
Per \cite{plw-10}, for two predictors, the four partial informations are defined as follows: the redundant information as $\Icape{X_1,X_2}{Y}$,
    the unique informations as
\begin{equation}
\begin{split}
    \opI_{\partial}( X_1 : Y ) &= \info{X_1}{Y} - \Icape{X_1,X_2}{Y} \\
    \opI_{\partial}( X_2 : Y ) &= \info{X_2}{Y} - \Icape{X_1,X_2}{Y}, 
\end{split}\label{eq:partialinfos}
\end{equation}and the synergistic information as \begin{equation}\begin{split}
    &\opI_{\partial}( X_1 \vee X_2 : Y )\\
    &= \info{X_1 \vee X_2}{Y} - \opI_{\partial}\left(X_1 :Y \right) - \opI_{\partial}\left(X_2 :Y \right) \\
    &\hspace{0.1 in}- \opI_{\partial}\left(X_1, X_2 :Y \right) \; \\
     &= \info{X_1 \vee X_2}{Y} - \info{X_1}{Y} - \info{X_2}{Y} + \Icape{X_1,X_2}{Y} .
     \end{split}
\label{eq:partialinfos2}
\end{equation}

\section{Desired $\Icap$ properties and canonical examples}
\label{subsec:prop}

There are a number of intuitive properties, proposed in~\cite{plw-10,qsmi,polani12,bertschinger12,lizier13,Iwedge}, that are considered desirable for the intersection information measure $\Icap$ to satisfy:

\begin{itemize}

    \item[\Szero] Weak Symmetry:
$\Icape{X_1,\ldots,X_n}{Y}$ is invariant under reordering of $X_1, \ldots, X_n$.

    \item[\Mzero] Weak Monotonicity:  $\Icape{X_1,\ldots,X_n, Z}{Y} \leq \Icape{X_1,\ldots,X_n}{Y}$ with equality if there exists $X_i \in \{X_1, \ldots, X_n\}$ such that $\ent{Z \vee X_i} = \ent{Z}$.


    \item[\SR] Self-Redundancy: $\Icape{X_1}{Y} = \info{X_1}{Y}$. The intersection information a single predictor $X_1$ conveys about the target $Y$ is equal to the mutual information between the $X_1$ and the target $Y$.

    \item[\Mone] Strong Monotonicity:
$\Icape{X_1,\ldots,X_n, Z}{Y} \leq \Icape{X_1,\ldots,X_n}{Y}$ with equality if there exists $X_i \in \{X_1, \ldots, X_n\}$ such that $\info{Z \vee X_i}{Y} = \info{Z}{Y}$.

    \item[\LP] Local Positivity:  For all $n$, the derived ``partial informations'' defined in \cite{plw-10} are nonnegative.  This is equivalent to requiring that $\Icap$ satisfy \emph{total monotonicity}, a stronger form of supermodularity.  For $n=2$ this can be concretized as, $\Icape{X_1,X_2}{Y} \geq \info{X_1}{X_2} - \info{X_1}{X_2 \middle| Y}$.

    \item[\TM] Target Monotonicity: If $\ent{Y|Z}=0$, then $\Icape{X_1,\ldots,X_n}{Y} \leq \Icape{X_1,\ldots,X_n}{Z}$.


\end{itemize}

There are also a number of canonical examples for which one or more of the partial informations have intuitive values, which are considered desirable for the intersection information measure $\Icap$ to  attain. 

\textbf{Example \textsc{Unq}}, shown in Figure~\ref{fig:exampleU}, is a canonical case of unique information, in which each predictor carries independent information about the target.  $Y$ has four equiprobable states: \bin{ab}, \bin{aB}, \bin{Ab}, and \bin{AB}.  $X_1$ uniquely specifies bit \bin{a}/\bin{A}, and $X_2$ uniquely specifies bit \bin{b}/\bin{B}.  Note that the states are named so as to highlight the two bits of unique information; it is equivalent to choose any four unique names for the four states.

\begin{figure}[h!bt]
	\centering
	\begin{minipage}[c]{0.47\linewidth} \centering
	\subfloat[$\Prob{x_1,x_2,y}$]{\begin{tabular}{ c | c c} \cmidrule(r){1-2}
	$X_1$ $X_2$ &$Y$ \\
	\cmidrule(r){1-2}
	\bin{a b} & \bin{ab} & \quad \nicefrac{1}{4}\\
	\bin{a B} & \bin{aB} & \quad \nicefrac{1}{4}\\
	\bin{A b} & \bin{Ab} & \quad \nicefrac{1}{4}\\
	\bin{A B} & \bin{AB} & \quad \nicefrac{1}{4}\\
	\cmidrule(r){1-2}
	\end{tabular}	
	} \end{minipage}
    \begin{minipage}[c]{0.47\linewidth} \centering
        \begin{align*}
                \info{X_1\vee X_2}{Y} &= 2 \\
                \info{X_1}{Y} &= 1 \\
                \info{X_2}{Y} &= 1 \\
        \end{align*}
        \end{minipage}
        \\[1.5em]
	\begin{minipage}[c]{0.47\linewidth} \centering
	\subfloat[circuit diagram]{ \includegraphics[height=1.3in]{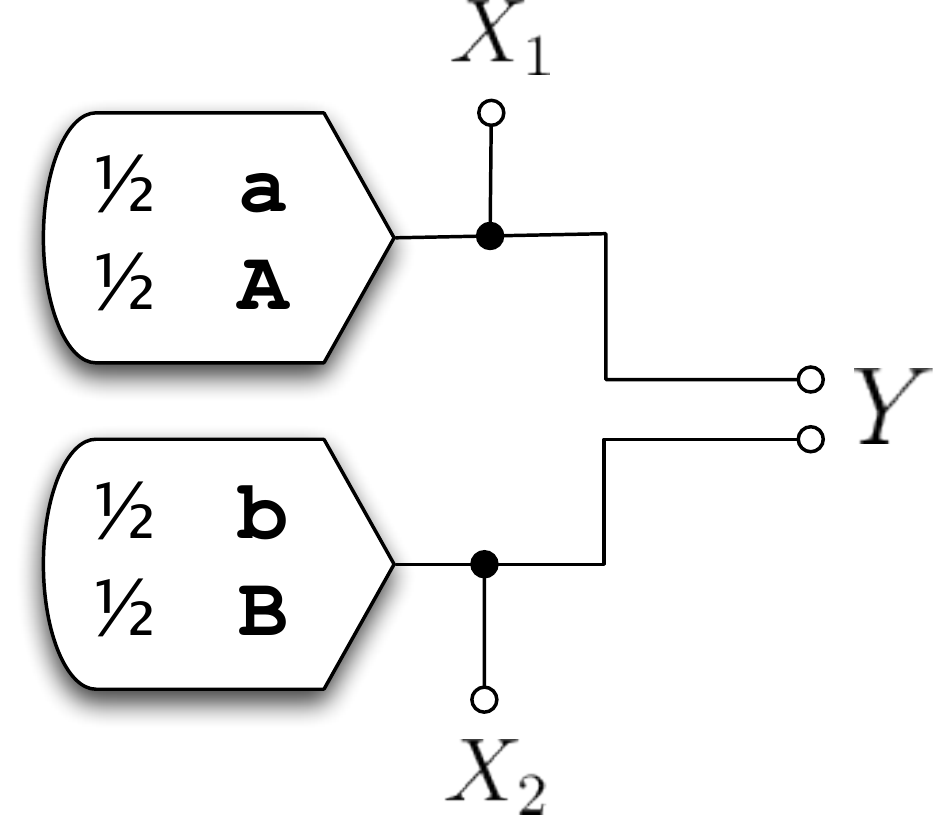} \label{fig:exampleUb} }
	\end{minipage}
	
        \begin{minipage}[c]{0.47\linewidth} \centering
                \subfloat[$\opI_{\min}$]{ \includegraphics[width=1.2in]{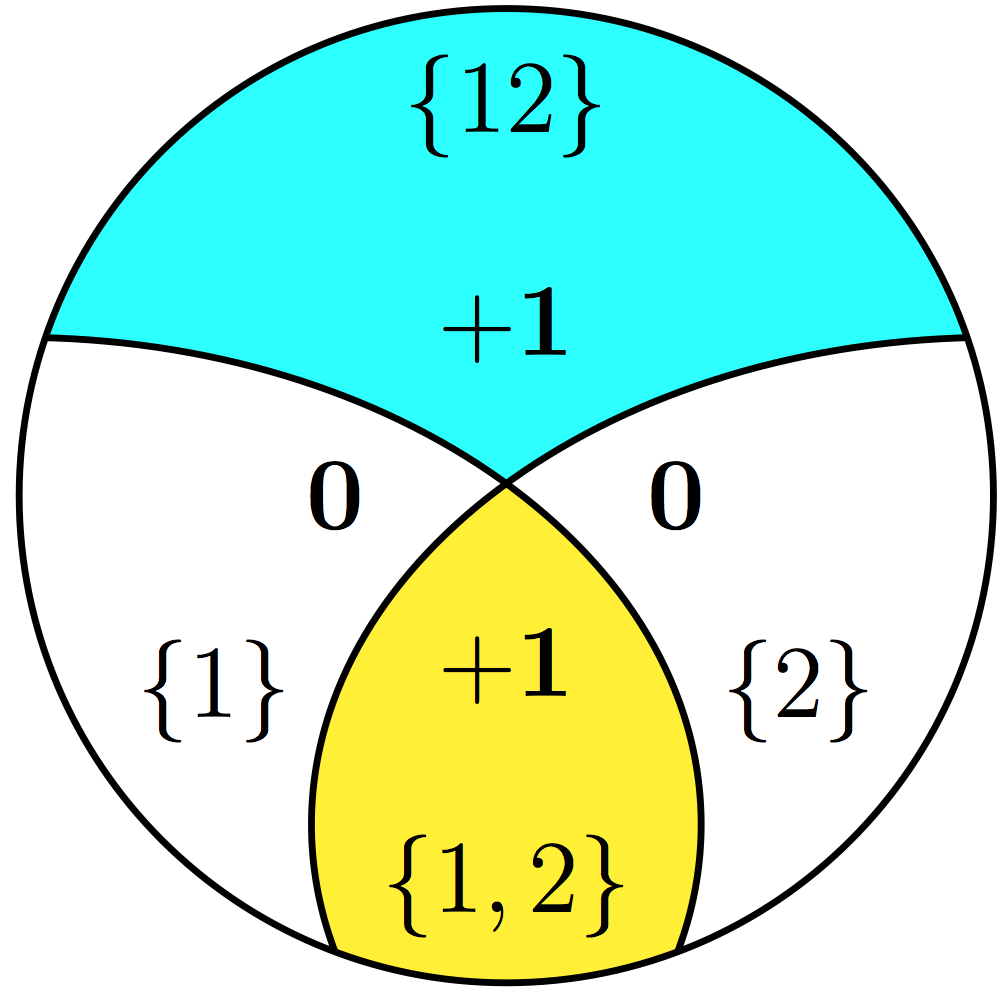} }
        \end{minipage}	
	\begin{minipage}[c]{0.47\linewidth} \centering
		\subfloat[Syn/$\Delta \opI$/$\Iw$/$\Ialpha$]{ \includegraphics[width=1.2in]{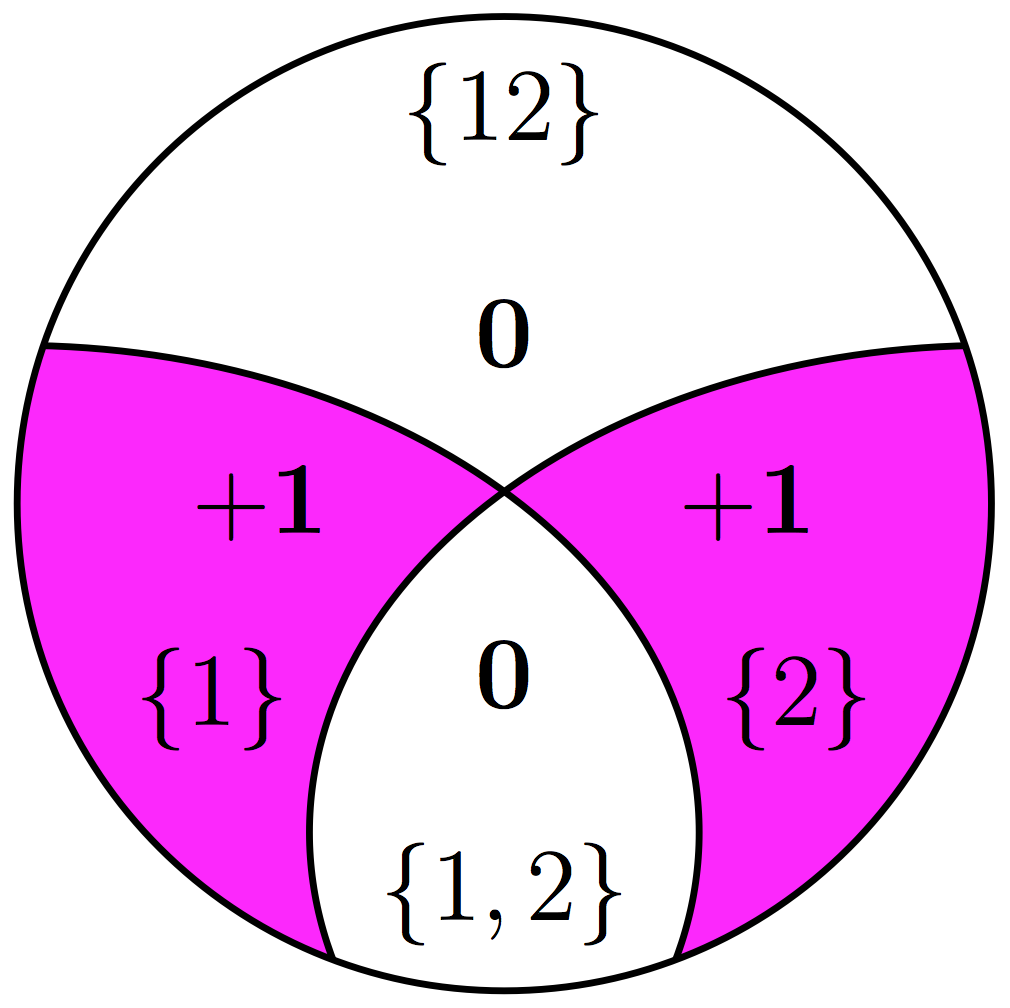} }
	\end{minipage}
	\caption{Example \textsc{Unq}.  $X_1$ and $X_2$ each uniquely carry one bit of information about $Y$.  $\info{X_1 X_2}{Y}~=~\ent{Y}~=~2$ bits.}
	\label{fig:exampleU}
\end{figure}

\textbf{Example RdnXor}, shown in \figref{fig:RdnXor}, is a canonical example of redundancy and synergy coexisting. The  \bin{r}/\bin{R} bit is redundant, while the 0/1 bit of $Y$ is synergistically specified as the XOR of the corresponding bits in $X_1$ and $X_2$.

\begin{figure}
        \centering
        \begin{minipage}[c]{0.55\linewidth} \centering \subfloat[$\Prob{x_1,x_2,y}$]{ \begin{tabular}{ c | c c } \cmidrule(r){1-2}
        $X_1$ $X_2$ & $Y$ \\
        \cmidrule(r){1-2}
        \bin{r0 r0} & \bin{r0} & \quad \nicefrac{1}{8}\\
        \bin{r0 r1} & \bin{r1} & \quad \nicefrac{1}{8}\\
        \bin{r1 r0} & \bin{r1} & \quad \nicefrac{1}{8}\\
        \bin{r1 r1} & \bin{r0} & \quad \nicefrac{1}{8}\\
        \addlinespace
        \bin{R0 R0} & \bin{R0} & \quad \nicefrac{1}{8}\\
        \bin{R0 R1} & \bin{R1} & \quad \nicefrac{1}{8}\\
        \bin{R1 R0} & \bin{R1} & \quad \nicefrac{1}{8}\\
        \bin{R1 R1} & \bin{R0} & \quad \nicefrac{1}{8}\\
        \cmidrule(r){1-2}
        \end{tabular} }
        \end{minipage}
        \begin{minipage}[c]{0.4\linewidth} \centering
        \begin{align*}
                \info{X_1\vee X_2}{Y} &= 2 \\
                \info{X_1}{Y} &= 1 \\
                \info{X_2}{Y} &= 1 \\
                \addlinespace
                \Iminn{X_1,X_2}{Y} &= 1 \\
                \Iwe{X_1,X_2}{Y} &= 1 \\
                \Ialphae{X_1,X_2}{Y} &= 1
        \end{align*}
        \end{minipage}
        \\[1.5em]

        \begin{minipage}[c]{\linewidth} \centering
        \subfloat[circuit diagram]{ \includegraphics[width=2.03in]{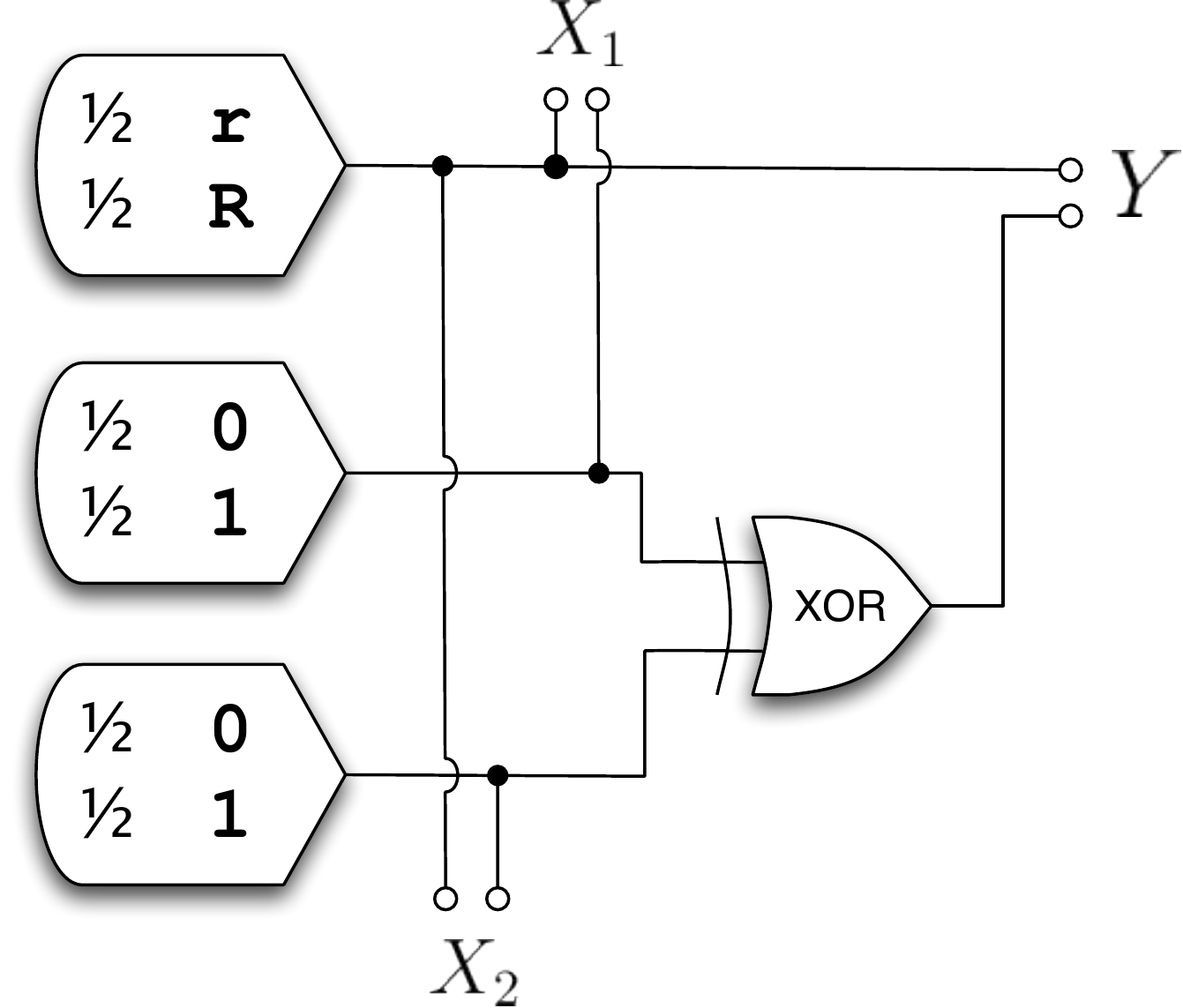} }
        \end{minipage}

        \begin{minipage}[c]{0.47\linewidth} \centering
                \subfloat[Syn]{ \includegraphics[width=1.2in]{PID2-UNQ.png} }
        \end{minipage}
        \begin{minipage}[c]{0.47\linewidth} \centering
                \subfloat[$\Delta \opI$/$\opI_{\min}$/$\Iw$/$\Ialpha$]{ \includegraphics[width=1.2in]{PID2-RDN_XOR.png} }
        \end{minipage}
        \caption{Example \textsc{RdnXor}.  This is the canonical example of redundancy and synergy coexisting.  $\opI_{\min}$ and $\Iw$ each reach the desired decomposition of one bit of redundancy and one bit of synergy.  This example demonstrates $\Iw$ correctly extracting the embedded redundant bit within $X_1$ and $X_2$.}
        \label{fig:RdnXor}
\end{figure}

\textbf{Example And}, shown in \figref{fig:AND}, is an example where the relationship between $X_1,X_2$ and $Y$ is nonlinear, making the desired partial information values less intuitively obvious. Nevertheless, it is desired that the partial information values should be nonnegative. 

\begin{figure}[h!bt]
	\centering
	\begin{minipage}[c]{0.47\linewidth} \centering
	\subfloat[$\Prob{x_1,x_2,y}$]{\begin{tabular}{ c | c c} \cmidrule(r){1-2}
	$X_1$ $X_2$ &$Y$ \\
	\cmidrule(r){1-2}
	\bin{0 0} & \bin{0} & \quad \nicefrac{1}{4}\\
	\bin{0 1} & \bin{0} & \quad \nicefrac{1}{4}\\
	\bin{1 0} & \bin{0} & \quad \nicefrac{1}{4}\\
	\bin{1 1} & \bin{1} & \quad \nicefrac{1}{4}\\
	\cmidrule(r){1-2}
	\end{tabular}	
	} \end{minipage}
    \begin{minipage}[c]{0.47\linewidth} \centering
        \begin{align*}
                \info{X_1\vee X_2}{Y} &= 0.811 \\
                \info{X_1}{Y} &= 0.311 \\
                \info{X_2}{Y} &= 0.311 \\
                \addlinespace
                \Iminn{X_1,X_2}{Y} &= 0.311 \\
                \Iwe{X_1,X_2}{Y} &= 0 \\
                \Ialphae{X_1,X_2}{Y} &= 0
        \end{align*}
        \end{minipage}
        \\[1.5em]
	\begin{minipage}[c]{\linewidth} \centering
	\subfloat[circuit diagram]{ \includegraphics[height=1.4in]{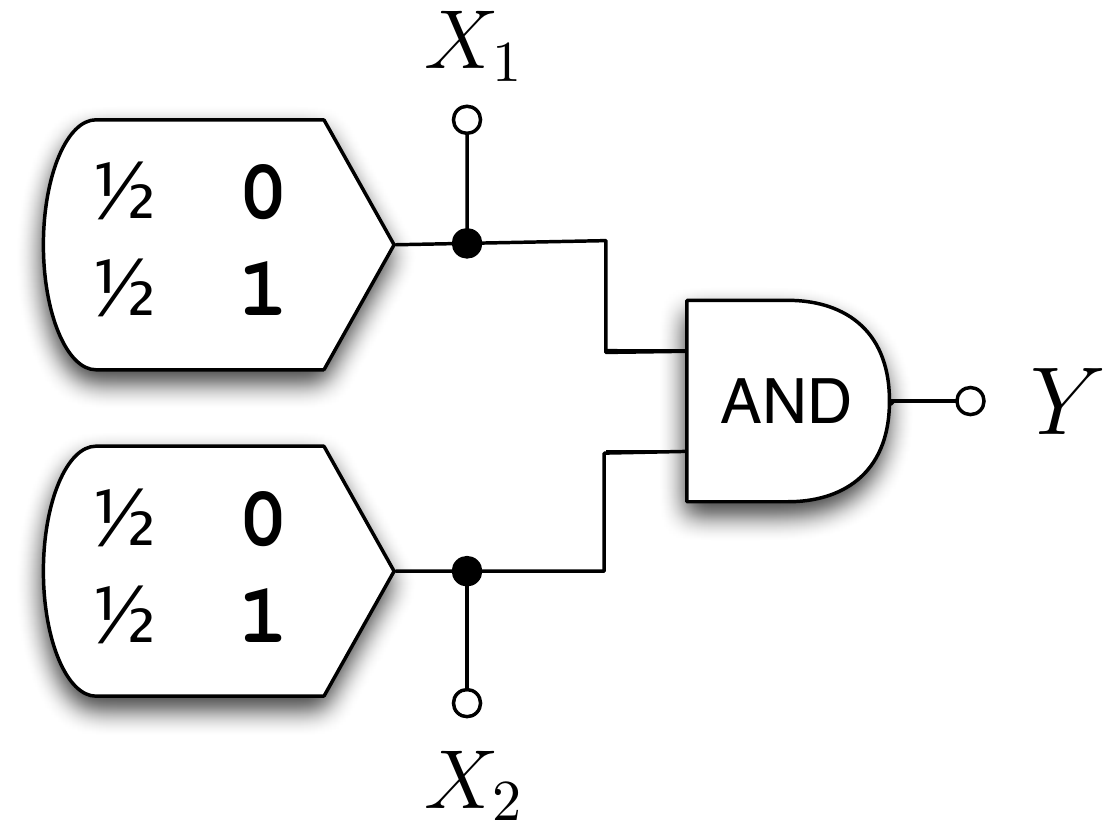} \label{fig:ANDb} }
	\end{minipage}
	
	\begin{minipage}[c]{0.33\linewidth} \centering
                \subfloat[$\Delta \opI$]{ \includegraphics[width=1.2in]{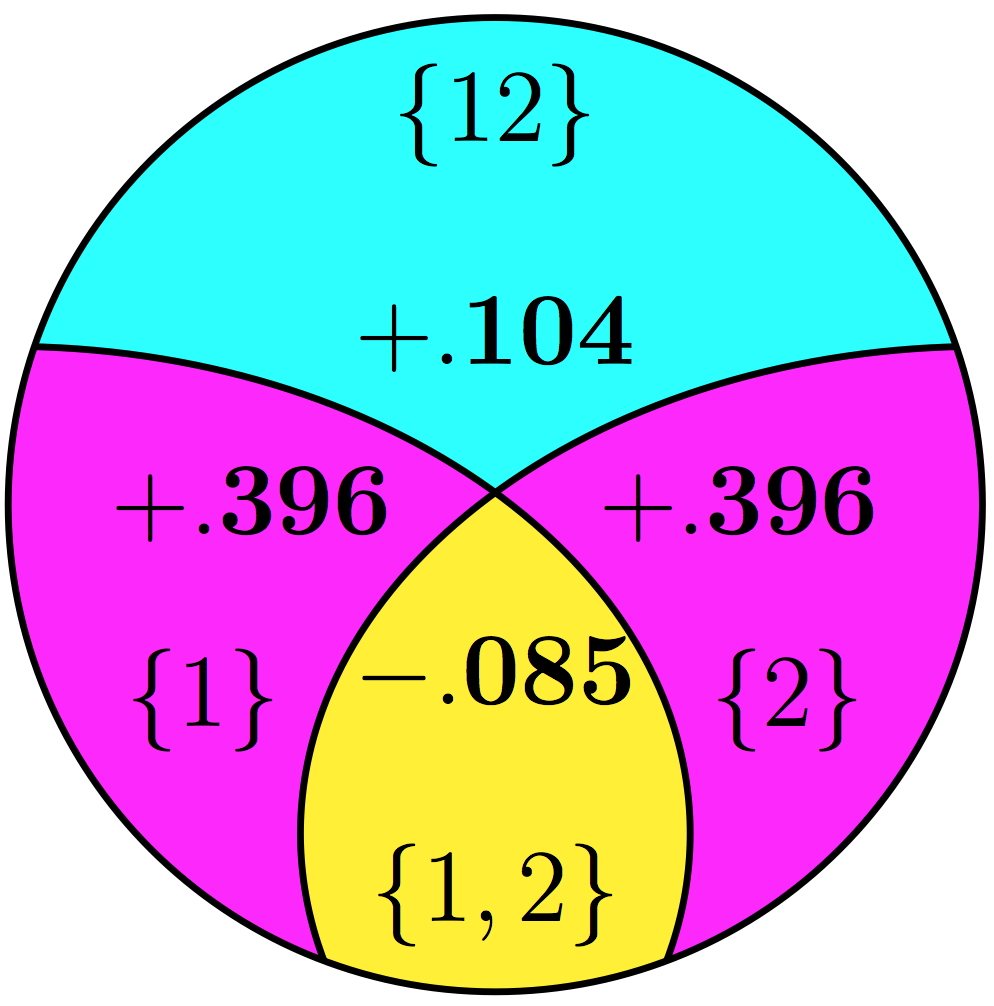} \label{fig:ANDdeltaI} }	
	\end{minipage}
        \begin{minipage}[c]{0.32\linewidth} \centering
		\subfloat[Syn/$\Iw$/$\Ialpha$]{ \includegraphics[width=1.2in]{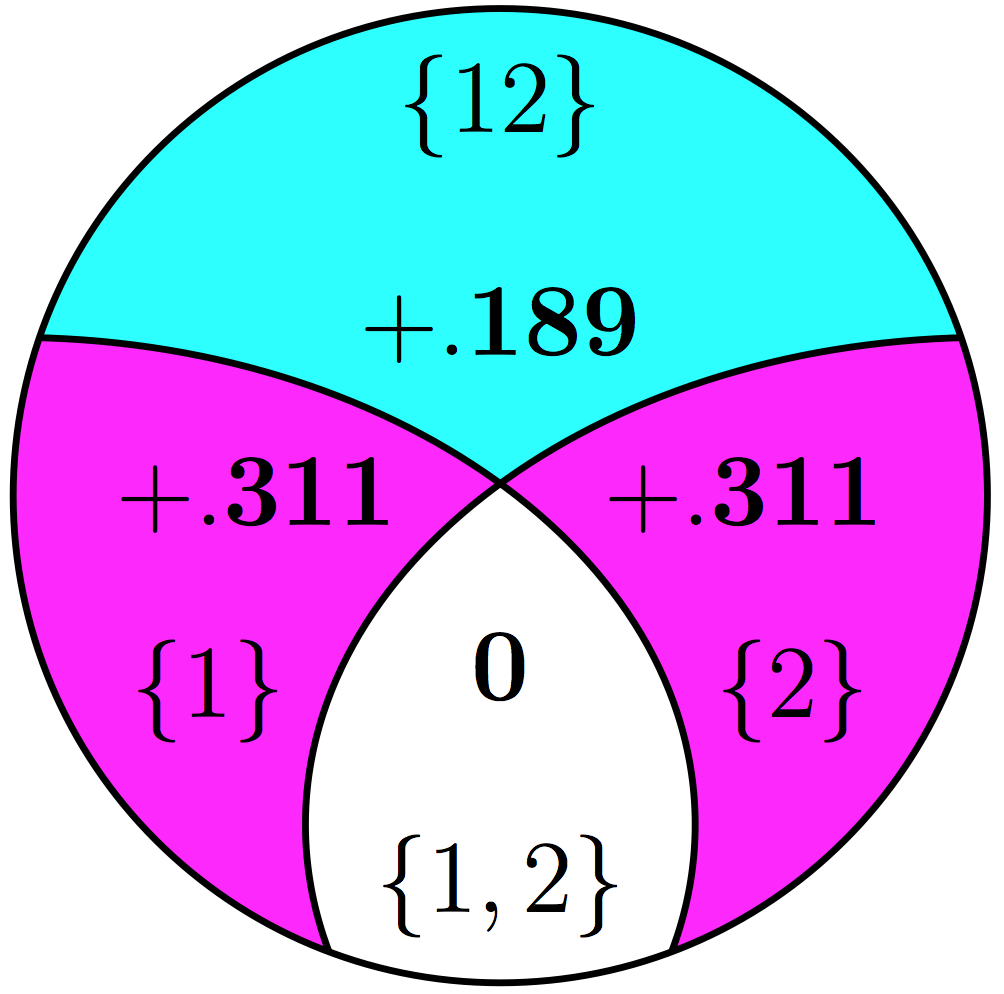} }
        \end{minipage}
    \begin{minipage}[c]{0.32\linewidth} \centering
                \subfloat[$\opI_{\min}$]{ \includegraphics[width=1.2in]{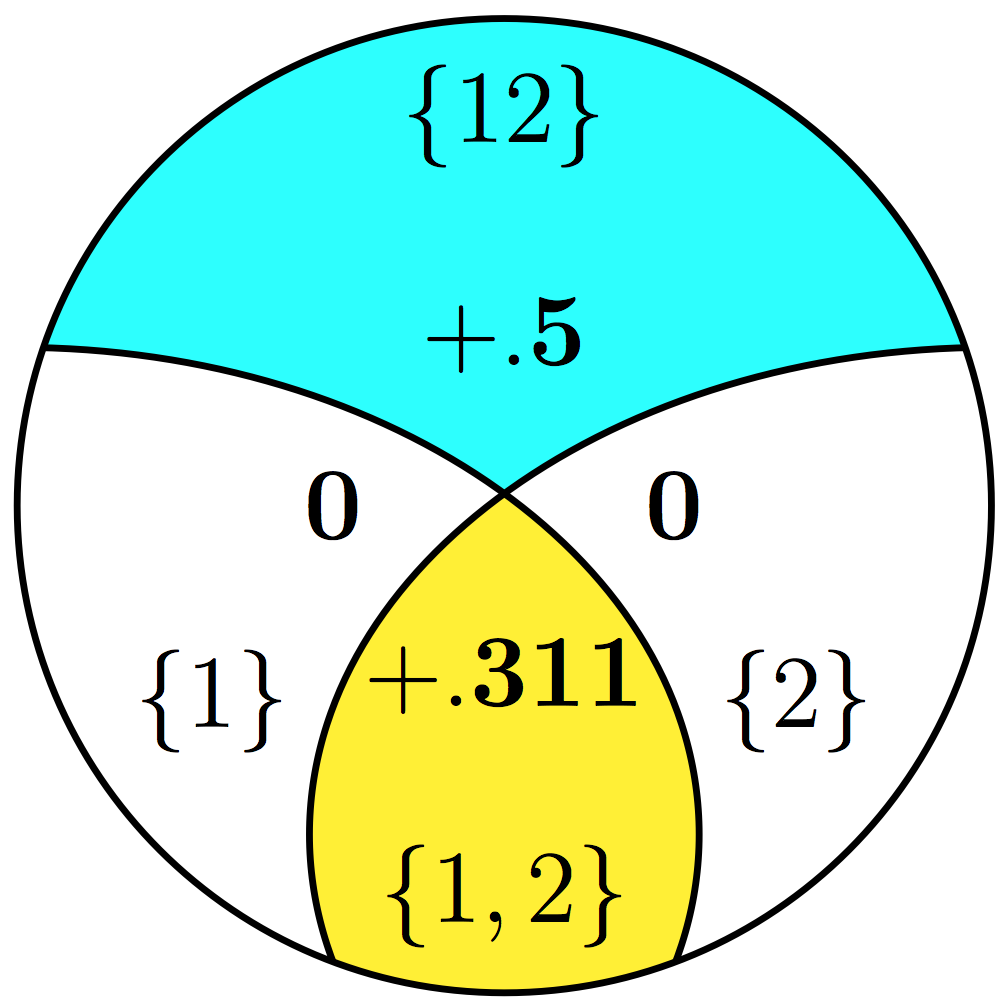} }
        \end{minipage}	
	\caption{Example \textsc{And}.  $\Delta \opI$ yields a redundant information of negative $.085$ bits.  There are arguments for the redundant information being zero or positive, but thus far all that is agreed upon is that the redundant information is between $[0,0.311]$ bits.}
	\label{fig:AND}
\end{figure}

\textbf{Example ImperfectRdn}, shown in \figref{fig:ImperfectRdn}, is an example of ``imperfect'' or ``lossy''  correlation between the predictors, where it is intuitively desirable that the derived redundancy should be positive.
Given \LP, we can determine the desired decomposition analytically.  First, $\info{X_1\vee X_2}{Y} = \info{X_1}{Y} = 1$ bit; therefore, $\info{X_2}{Y|X_1} = \info{X_1 \vee X_2}{Y} - \info{X_1}{Y}~=~0$ bits.  This determines two of the partial informations---the synergistic information $\opI_{\partial}\!\left(X_1 \vee X_2 \:Y\right)$ and the unique information $\opI_{\partial}\!\left( X_2 \:Y\right)$ are both zero.  Then, the redundant information $\opI_{\partial}\!\left( X_1, X_2 \:Y\right) = \info{X_2}{Y} - \opI_{\partial}( X_2 : Y ) = \info{X_2}{Y} = 0.99$ bits.  Having determined three of the partial informations, we compute the final unique information $\opI_{\partial}\!\left( X_1 \:Y\right)=\info{X_1}{Y} - 0.99 = 0.01$ bits.

\begin{figure}
        \centering
        \begin{minipage}[c]{0.47\linewidth} \centering \subfloat[$\Prob{x_1, x_2, y}$]{ \begin{tabular}{ c | c c } \cmidrule(r){1-2}
$\ \, X_1 \, X_2$  &$Y$ \\
\cmidrule(r){1-2}
\bin{0 0} & \bin{0} & \quad $0.499$\\
\bin{0 1} & \bin{0} & \quad $0.001$\\
\bin{1 1} & \bin{1} & \quad $0.500$\\
\cmidrule(r){1-2}
\end{tabular}
\label{fig:ANDa} }
\end{minipage} \begin{minipage}[c]{0.47\linewidth} \centering
        \begin{align*}
        \info{X_1\vee X_2}{Y} &= 1 \\
        \info{X_1}{Y} &= 1 \\
        \info{X_2}{Y} &= 0.99 \\
        \end{align*}
        \end{minipage}
        \\[1.5em]
        \begin{minipage}[c]{0.9\linewidth} \centering
        \subfloat[circuit diagram]{ \includegraphics[width=1.95in]{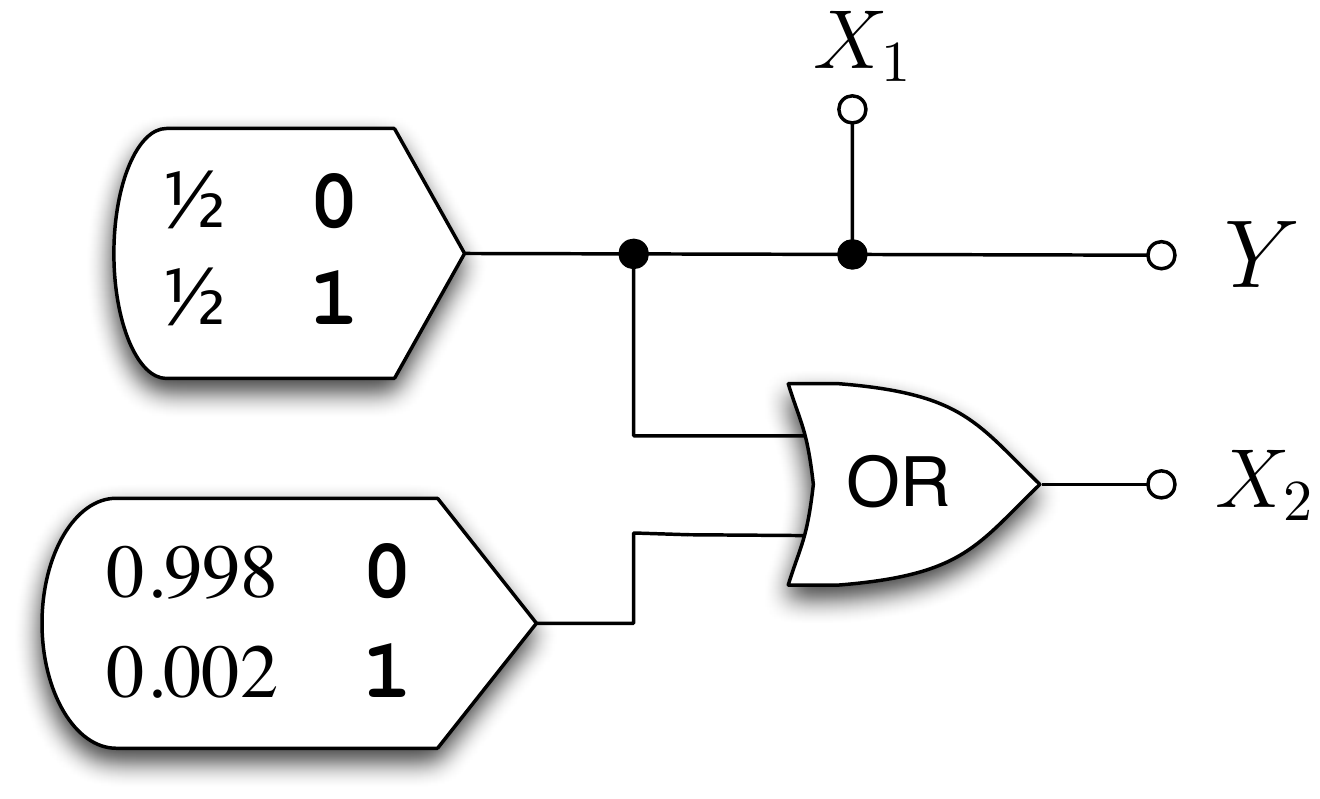} }
    \end{minipage}

        \begin{minipage}[c]{0.47\linewidth}
        \subfloat[$\Iw$]{ \includegraphics[width=1.2in]{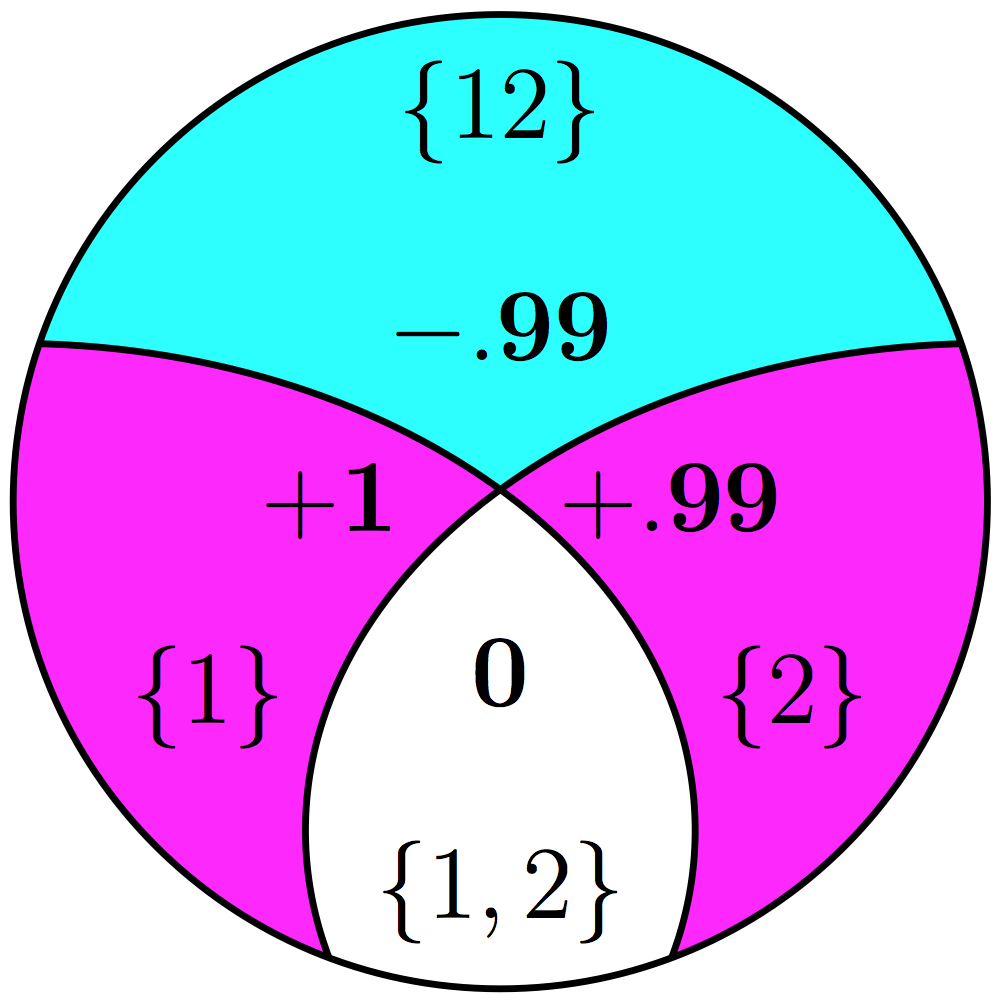} \label{fig:ImperfectRdn_Iw} }
    \end{minipage}
        \begin{minipage}[c]{0.47\linewidth}
        \subfloat[Syn/$\Delta \opI$/$\opI_{\min}$/$\Ialpha$]{ \includegraphics[width=1.2in]{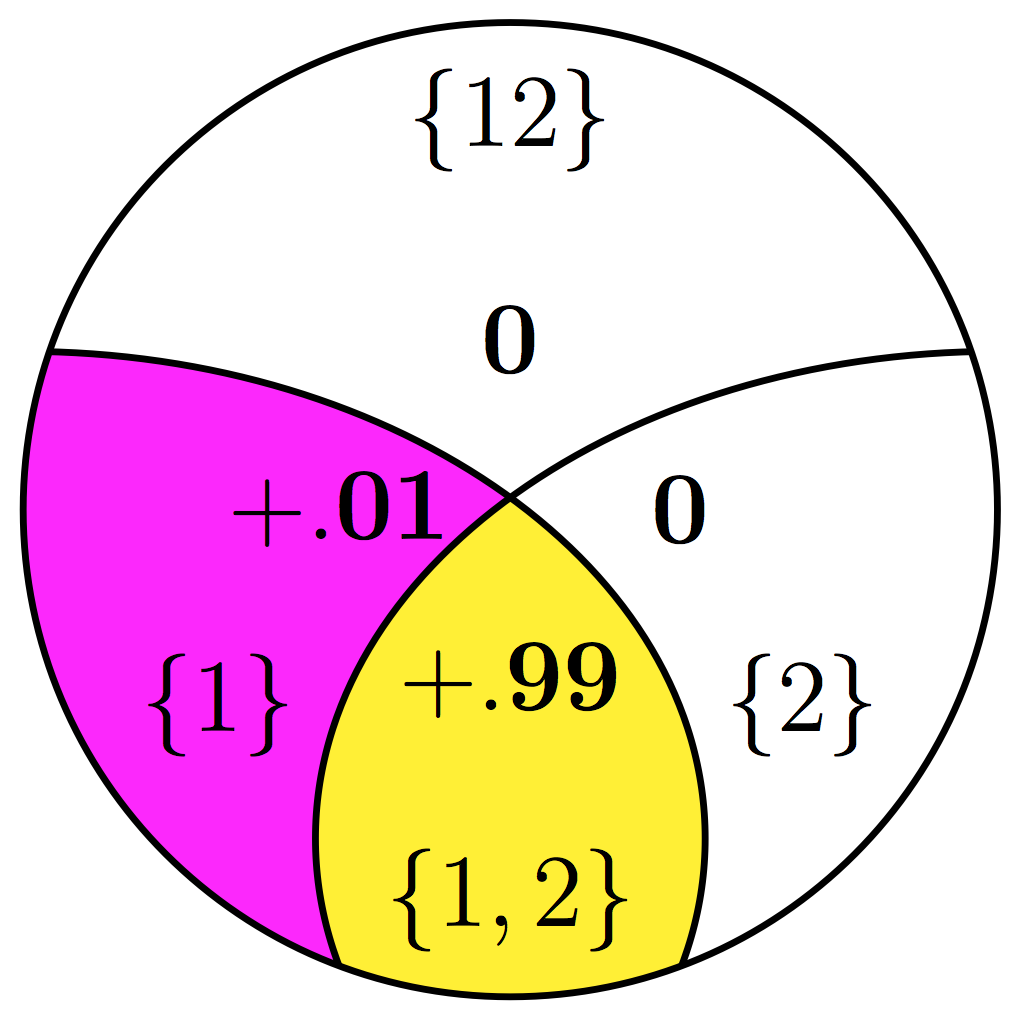} \label{fig:ImperfectRdn_Imin} }
    \end{minipage}
        \caption{Example \textsc{ImperfectRdn}. $\Iw$ is blind to the noisy correlation between $X_1$ and $X_2$ and calculates zero redundant information.  An ideal $\Icap$ measure would detect that all of the information $X_2$ specifies about $Y$ is also specified by $X_1$ to calculate $\Icape{X_1, X_2}{Y} = 0.99$ bits.}
        \label{fig:ImperfectRdn}
\end{figure}

\section{Previous candidate measures}

In \cite{plw-10}, the authors propose to use the following quantity,
$\opI_{\min}$, as the intersection information measure:
\begin{equation}
\begin{split}
    &\opI_{\min}\left( X_1, \ldots, X_n : Y \right)\\ &\equiv \sum_{y \in Y} \Prob{y} \min_{i \in \{1, \ldots, n\}} \info{X_i}{Y=y} \\
     &= \sum_{y \in Y} \Prob{y} \min_{i \in \{1, \ldots, n\}} \DKL{ \Prob{X_i|y} }{ \Prob{X_i} } \; ,
\end{split}
\label{eq:Imin_DKL}
\end{equation}
where $\opname{D_{KL}}$ is the Kullback-Leibler divergence.

Though $\opI_{\min}$ is an intuitive and plausible choice for the intersection information, \cite{qsmi} showed that $\opI_{\min}$ has counterintuitive properties. In particular, $\opI_{\min}$ calculates one bit of redundant information for example \textsc{Unq} (\figref{fig:exampleU}). It does this because each input shares one bit of information with the output.  However, its quite clear that the shared informations are, in fact, different: $X_1$ provides the low bit, while $X_2$ provides the high bit. This led to the conclusion that $\opI_{\min}$ \emph{overestimates} the ideal intersection information measure by focusing only on \emph{how much} information the inputs provide to the output.
Another way  to understand why $\opI_{\min}$ overestimates redundancy in example \textsc{Unq} is to imagine a hypothetical example where there are exactly two bits of unique information for every state $y \in Y$ and no synergy or redundancy. $\opI_{\min}$ would calculate the redundancy as the minimum over both predictors which would be $\min[1, 1] = 1$ bit. Therefore $\opI_{\min}$ would calculate $1$ bit of redundancy even though by definition there was no redundancy but merely two bits of unique information.

A candidate measure of synergy, $\Delta \opI$, proposed in \cite{nirenberg03}, leads to a negative value of redundant information for Example \textsc{And}. Starting from $\Delta \opI$ as a direct measure for synergistic information and then using eqs.~\eqref{eq:partialinfos} and \eqref{eq:partialinfos2} to derive the other terms, we get \figref{fig:ANDdeltaI} showing $\Icape{X_1,X_2}{Y} \approx -.085$ bits.

Another candidate measure of synergy,  Syn \cite{schneidman-03}, calculates zero synergy and redundancy for Example \textsc{RdnXor}, as opposed to the intuitive value of one bit of redundancy and one bit of synergy.

\section{New candidate measures}
\subsection{The $\Iw$ measure}
\label{sect:Icrv}
Based on \cite{wolf04}, we can consider a candidate intersection information as the maximum mutual information $\info{Q}{Y}$ that some random variable $Q$ conveys about $Y$, subject to $Q$ being a function of each predictor $X_1,\ldots,X_n$.  After some algebra, the leads to,

\begin{align}
\label{eq:crvcap}
\begin{split}
    &\Iwe{X_1, \ldots, X_n}{Y} \\&\equiv \max_{ \Pr(Q|Y) } \info{Q}{Y} \\
    &\hspace{0.2in} \textnormal{subject to } \forall i \in \{1, \ldots, n\} : \ent{Q|X_i} = 0
\end{split} \; ,
\end{align}
which reduces to a simple expression in \cite{Iwedge}.

Example \textsc{ImperfectRdn} highlights the foremost shortcoming of $\Iw$; $\Iw$ does not detect ``imperfect'' or ``lossy'' correlations between $X_1$ and $X_2$.  Instead, $\Iw$ calculates  zero redundant information, that $\Icape{X_1,X_2}{Y}=0$ bits.  This arises from $\Prob{X_1 = \bin{1}, X_2 = \bin{0}} > 0$.  If this were zero, \textsc{ImperfectRdn} reverts to being determined by the properties \SR and the \Mzero equality condition.  Due to the nature of the common random variable, $\Iw$ only sees the ``deterministic'' correlations between $X_1$ and $X_2$---add even an iota of noise between $X_1$ and $X_2$ and $\Iw$ plummets to zero. This highlights a related issue with $\Iw$; it is not continuous---an arbitrarily small change in the probability distribution can result in a discontinuous jump in the value of $\Iw$.

Despite this, $\Iw$ is useful stepping-stone, it captures what is inarguably redundant information (the common random variable).  In addition, unlike earlier measures, $\Iw$ satisfies \TM.

\subsection{The $\Ialpha$ measure}
\label{sect:Ialpha}
Intuitively, we expect that if $Q$ only specifies redundant information, that conditioning on any predictor $X_i$ would vanquish all of the information $Q$ conveys about $Y$.  Noticing that $\Iw$ \emph{underestimates} the ideal $\Icap$ measure (i.e. it doesn't satisfy \LP), we loosen the constraint $\ent{Q|X_i}=0$ in eq.~\eqref{eq:crvcap}, leading us to define the measure $\Ialpha$:
\begin{align}
\begin{split}
    &\Ialphae{X_1, \ldots, X_n}{Y} \\
    &\equiv \max_{ \Pr(Q|Y) } \info{Q}{Y} \\
    &\hspace{0.2in} \textnormal{subject to } \forall i \in \{1, \ldots, n\} : \info{Q \vee X_i}{Y}=\info{X_i}{Y} \; .
\end{split} \\
\begin{split}
     &= \max_{ \Pr(Q|Y) } \info{Q}{Y} \\
    &\hspace{0.2in} \textnormal{subject to } \forall i \in \{1, \ldots, n\} : \ent{Q\middle|X_i} = \ent{Q\middle|X_i \vee Y} \; .
\end{split}
\end{align}
This measure obtains the desired values for the canonical examples in Section~\ref{subsec:prop}. However, its implicit definition makes it more difficult to verify whether or not it satisfies the desired properties  in Section~\ref{subsec:prop}.  Pleasingly, $\Ialpha$ also satisfies \TM.

We can also show (See Lemmas~\ref{lem:IwleqIalpha} and \ref{lem:IalphaleqImin} in Appendix~\ref{appendix:miscproofs}) that
\begin{align}
\begin{split}
    &0 \leq \Iwe{X_1,\ldots,X_n}{Y} \leq \Ialphae{X_1, \ldots, X_n}{Y} \\&\leq \opI_{\min}\left( X_1,\ldots,X_n : Y \right) \; .
\end{split}
\end{align}

While  $\Ialpha$ satisfies previously defined canonical examples, we have found another example, shown in \figref{fig:Subtle}, for which  $\Iw$ and $\Ialpha$ both calculate negative synergy. This example further complicates Example \textsc{And} by making the predictors mutually dependent.

\begin{figure}
        \centering
        \begin{minipage}[c]{0.45\linewidth} \centering \subfloat[$\Prob{x_1, x_2, y}$]{ \begin{tabular}{ c | c c } \cmidrule(r){1-2}
$\ \, X_1 \, X_2$  &$Y$ \\
\cmidrule(r){1-2}
\bin{0 0} & \bin{00} & \quad \nicefrac{1}{3}\\
\bin{0 1} & \bin{01} & \quad \nicefrac{1}{3}\\
\bin{1 1} & \bin{11} & \quad \nicefrac{1}{3}\\
\cmidrule(r){1-2}
\end{tabular}
\label{fig:NEGSYN} }
\end{minipage} \begin{minipage}[c]{0.45\linewidth} \centering
        \begin{eqnarray*}
        \nonumber \info{X_1\vee X_2}{Y} &= 1.585 \\
        \info{X_1}{Y} &= 0.918 \\
        \info{X_2}{Y} &= 0.918 \\
        \info{X_1}{X_2} &= 0.252
        \end{eqnarray*}
        \end{minipage}
        \\[1.5em]
        \begin{minipage}[c]{\linewidth} \centering
        \subfloat[circuit diagram]{ \includegraphics[width=2.1in]{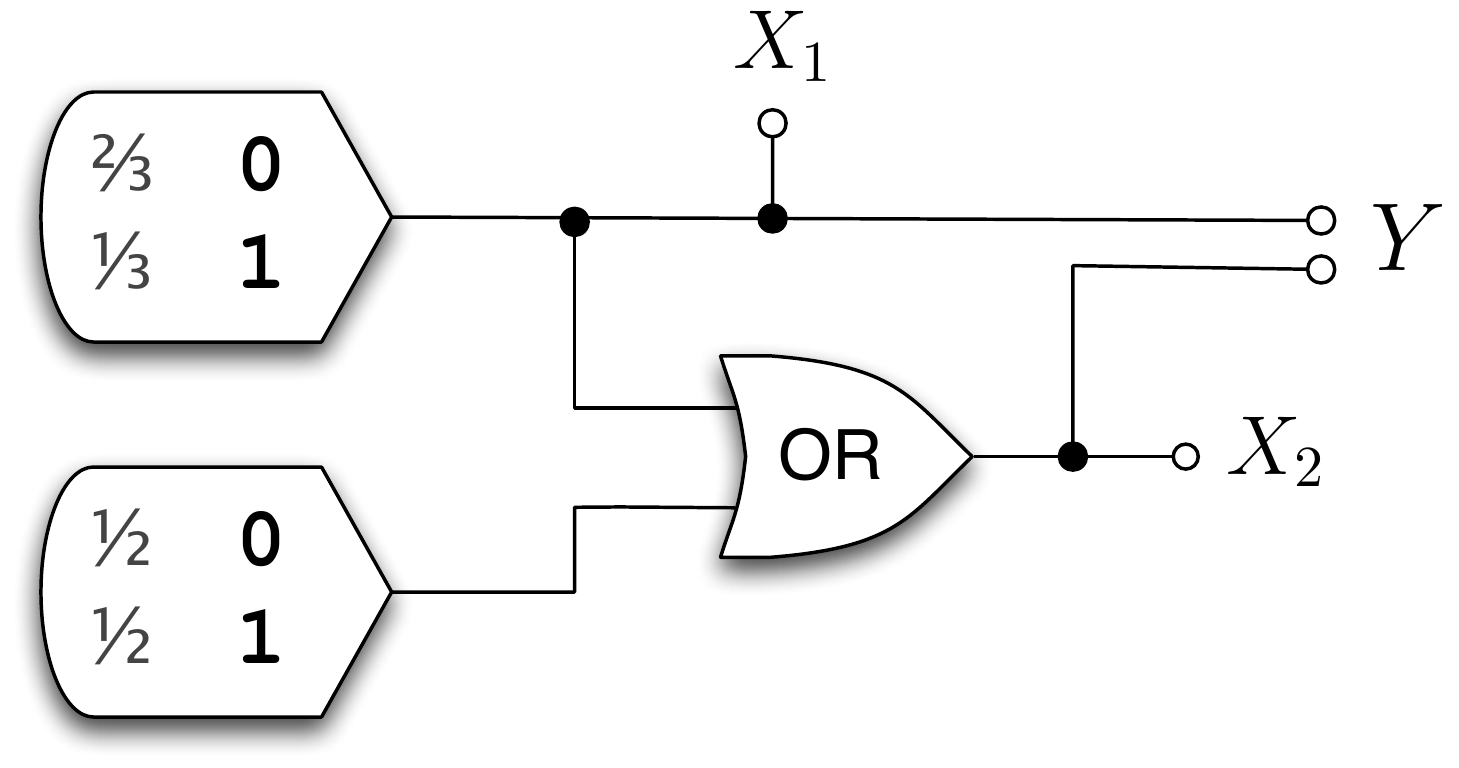} \label{fig:Subtle_circuit} }
    \end{minipage}

        \begin{minipage}[c]{0.45\linewidth} \centering
        \subfloat[$\Iw$/$\Ialpha$]{ \includegraphics[width=1.3in]{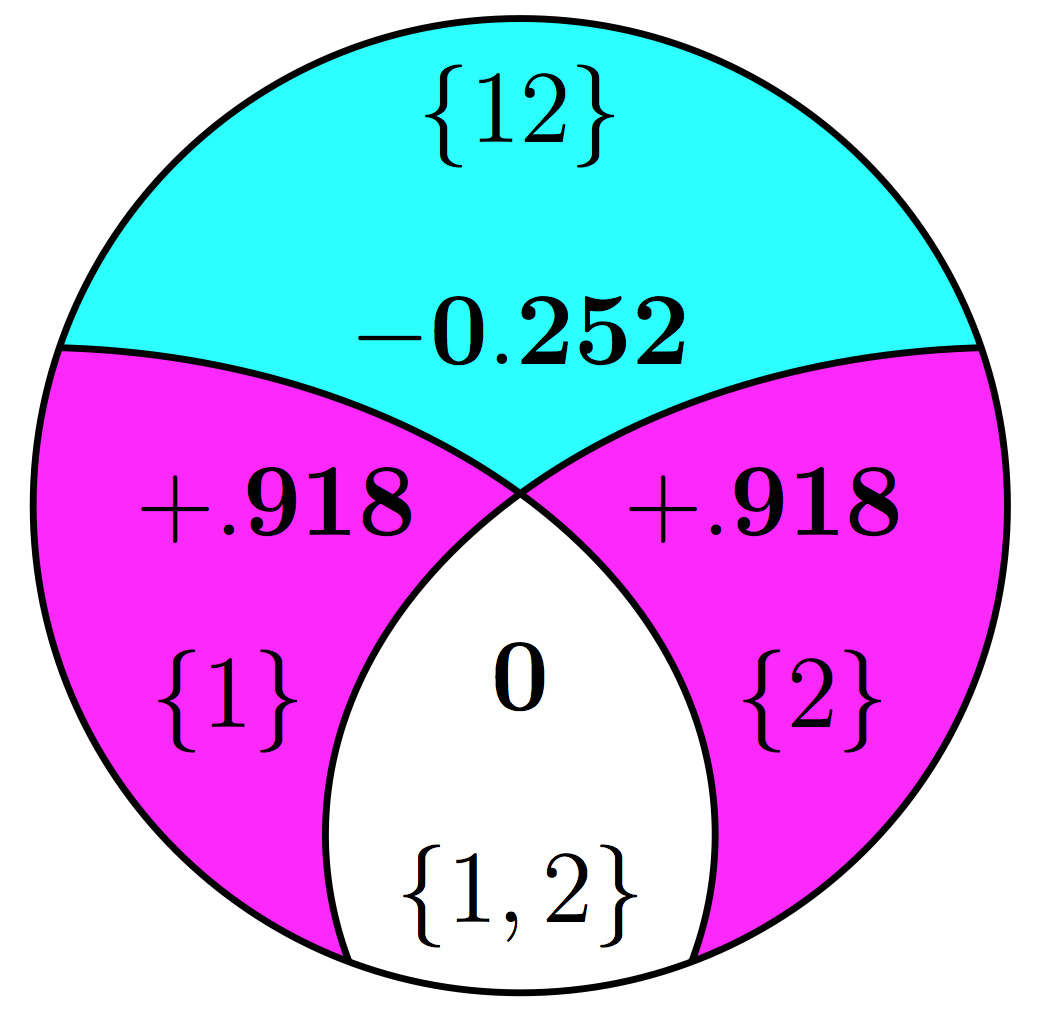} }
    \end{minipage}
        \begin{minipage}[c]{0.45\linewidth} \centering
        \subfloat[$\opI_{\min}$]{ \includegraphics[width=1.25in]{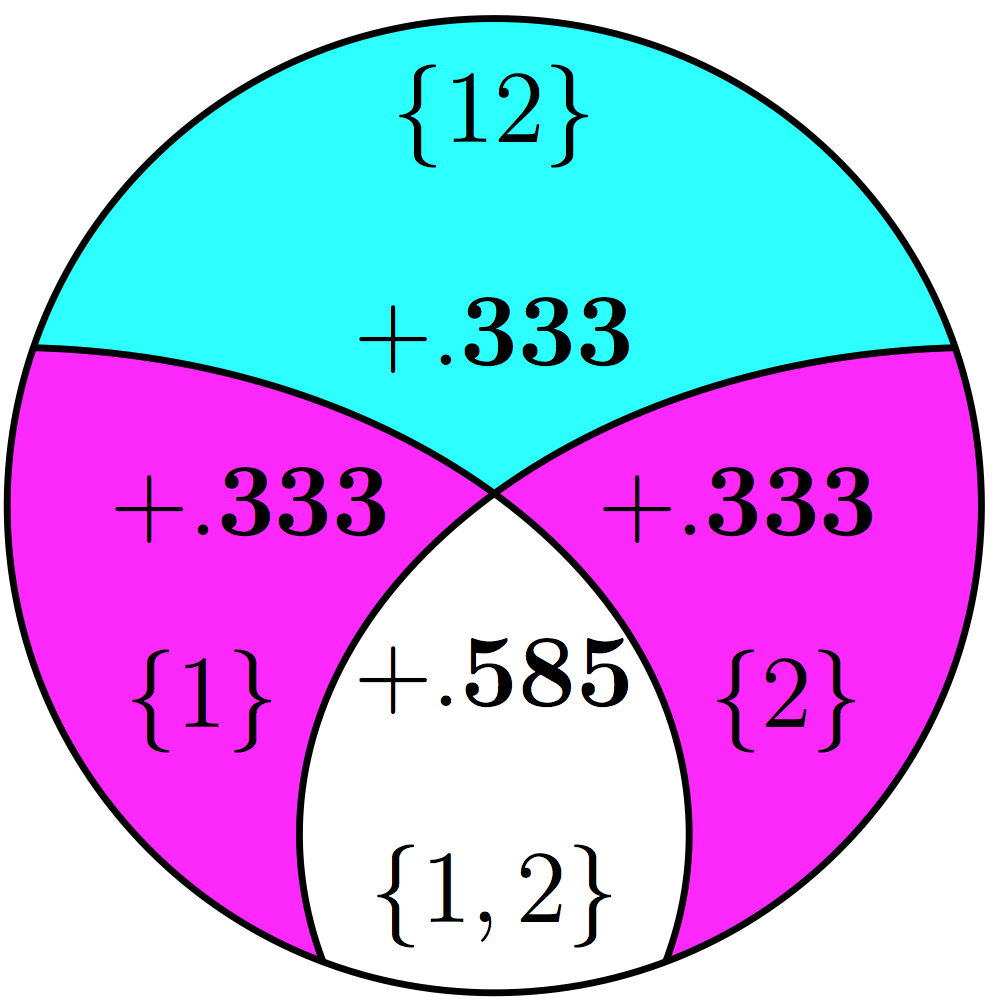} }
    \end{minipage}
        \caption{Example \textsc{Subtle}.}
    \label{fig:Subtle}
\end{figure}

\section{Conclusion}
We have defined new measures for redundant information of predictor random variables regarding a target random variable.
It is not clear whether it is possible for a single measure of synergy/redundancy to satisfy all previously proposed desired properties and canonical examples, and some of them are debatable. For example, a plausible measure of the ``unique information'' \cite{bertschinger13} and ``union information'' \cite{qsmi} yields an equivalent $\Icap$ measure that does not satisfy \TM. Determining whether some of these properties are contradictory is an interesting question for further work.

\textbf{Acknowledgements}.  We thank Jim Beck, Yaser Abu-Mostafa, Edwin Chong, Chris Ellison, and Ryan James for helpful discussions.

\bibliography{quant_synergy}

\appendix
\textbf{Proof $\Ialpha$ does not satisfy \LP}.  Proof by counter-example \textsc{Subtle} (\figref{fig:Subtle}).
For $\info{Q}{Y|X_1}=0$, then $Q$ must not distinguish between states of $Y=\bin{00}$ and $Y=\bin{01}$ (because $X_1$ does not distinguish between these two states).  This entails that $\Prob{Q|Y=\bin{00}}=\Prob{Q|Y=\bin{01}}$.
By symmetry, likewise for $\info{Q}{Y|X_2}=0$, $Q$ must be distinguish between states $Y=\bin{01}$ and $Y=\bin{11}$.  Altogether, this entails that $\Prob{Q|Y=\bin{00}} = \Prob{Q|Y=\bin{01}} = \Prob{Q|Y=\bin{11}}$, which then entails, $\Prob{q|y_i} = \Prob{q|y_j} \quad \forall q \in Q, y_i \in Y, y_j \in Y$, which is only achievable when $\Prob{q} = \Prob{q|y} \quad \forall q \in Q, y \in Y$.  This makes $\info{Q}{Y}=0$, therefore for example \textsc{Subtle}, $\Ialphae{X_1, X_2}{Y}=0$.
%
%
%
\label{appendix:miscproofs}

\begin{lem}\label{lem:IwleqIalpha}
We have $\Iwe{X_1, \ldots, X_n}{Y} \leq  \Ialphae{X_1, \ldots, X_n}{Y}$.
\end{lem}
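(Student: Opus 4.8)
The plan is to exploit that $\Iwe{X_1,\ldots,X_n}{Y}$ and $\Ialphae{X_1,\ldots,X_n}{Y}$ are defined as the maximum of the \emph{same} objective $\info{Q}{Y}$ over a set of candidate conditionals $\Pr(Q|Y)$, and that the feasibility constraint for $\Iw$ is strictly more restrictive than the one for $\Ialpha$. So the whole argument reduces to showing the inclusion of feasible sets: every $Q$ admissible for the $\Iw$ program is admissible for the $\Ialpha$ program. Since one then maximizes an identical objective over a larger set, the $\Ialpha$ optimum cannot be smaller.

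For the inclusion, I would first recall the equivalence of the two stated forms of the $\Ialpha$ constraint: by the chain rule $\info{Q, X_i}{Y} = \info{X_i}{Y} + \info{Q}{Y|X_i}$, the constraint $\info{Q, X_i}{Y} = \info{X_i}{Y}$ is the same as $\info{Q}{Y|X_i} = 0$, i.e. $\ent{Q|X_i} = \ent{Q|X_i \vee Y}$. Now take any $Q$ feasible for $\Iw$, so $\ent{Q|X_i} = 0$ for every $i$. Since conditioning cannot increase entropy and entropy is nonnegative, $0 \le \ent{Q|X_i \vee Y} \le \ent{Q|X_i} = 0$, hence $\ent{Q|X_i \vee Y} = 0 = \ent{Q|X_i}$, so $Q$ satisfies every $\Ialpha$ constraint. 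Therefore the $\Iw$-feasible region is contained in the $\Ialpha$-feasible region, and the inequality $\Iwe{X_1,\ldots,X_n}{Y} \le \Ialphae{X_1,\ldots,X_n}{Y}$ follows immediately.

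There is no substantive obstacle here; the argument is purely algebraic and does not need any of the state-distinguishability reasoning used elsewhere for $\Ialpha$. The only points deserving a line of care are (i) spelling out the chain-rule equivalence above so that the two written forms of the $\Ialpha$ constraint are visibly the same, and (ii) if one wants the passage ``$\max$ over a superset $\ge$ $\max$ over a subset'' to be fully rigorous, noting that both feasible sets are nonempty (each contains the constant random variable $Q$, for which $\ent{Q|X_i}=0$ and $\info{Q}{Y}=0$) and closed subsets of the relevant simplex, so the maxima are attained and the inequality on optima is a direct consequence of the set inclusion.
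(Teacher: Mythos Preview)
Your argument is correct and is essentially the same approach as the paper's: show that any $Q$ satisfying the $\Iw$ constraint $\ent{Q|X_i}=0$ automatically satisfies the $\Ialpha$ constraint $\info{Q}{Y|X_i}=0$, so the same objective $\info{Q}{Y}$ is being maximized over a larger feasible set in $\Ialpha$. The paper phrases it by picking the specific optimizer $Q'=X_1\wedge\cdots\wedge X_n$ of $\Iw$ and checking it is feasible for $\Ialpha$, whereas you state the full feasible-set inclusion; this is only a cosmetic difference, and your added remarks on the chain-rule equivalence and nonemptiness are harmless extra care.
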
  \begin{proof} We define a random variable $Q^\prime = X_1 \wedge \cdots \wedge X_n$.  We then plugin $Q^\prime$ for $Q$ in the definition of $\Ialpha$.  This newly plugged-in $Q$ satisfies the constraint $\forall i \in \{1, \ldots, n\}$ that $\info{Q}{Y|X_i}=0$.  Therefore, $Q^\prime$ is always a possible choice for $Q$, and the maximization of $\info{Q}{Y}$ in $\Ialpha$ must be at least as large as $\info{Q^\prime}{Y} = \Iwe{X_1, \ldots, X_n}{Y}$.
\end{proof}

\begin{lem} \label{lem:IalphaleqImin}
We have $\Ialphae{X_1, \ldots, X_n}{Y} \leq  \opI_{\min}\left( X_1, \ldots, X_n : Y \right)$
\end{lem}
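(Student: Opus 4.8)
The plan is to take an arbitrary feasible $Q$ for the $\Ialpha$ optimization and bound $\info{Q}{Y}$ from above by $\opI_{\min}\!\left(X_1,\ldots,X_n:Y\right)$, which—since this holds for every feasible $Q$, in particular the maximizer—gives the claim. The natural route is to expand both quantities pointwise over $y$. Recall $\opI_{\min}$ is a $\Prob{y}$-weighted average of $\min_i \info{X_i}{Y=y} = \min_i \DKL{\Prob{X_i|y}}{\Prob{X_i}}$, while $\info{Q}{Y} = \sum_y \Prob{y}\, \DKL{\Prob{Q|y}}{\Prob{Q}}$. So it suffices to show, for each $y$ and each $i$, that $\DKL{\Prob{Q|y}}{\Prob{Q}} \le \DKL{\Prob{X_i|y}}{\Prob{X_i}}$; taking the minimum over $i$ on the right and then the $\Prob{y}$-average yields the result.

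The key step is therefore a per-$y$ data-processing-type inequality: I want to exploit the feasibility constraint in the form $\ent{Q|X_i} = \ent{Q|X_i \vee Y}$, i.e.\ $\info{Q}{Y|X_i}=0$, which says $Q \to X_i \to Y$ is a Markov chain (equivalently $Q$ and $Y$ are conditionally independent given $X_i$). Combined with the fact that $Q$ is obtained from $Y$ via the channel $\Pr(Q|Y)$ in the optimization, the cleanest formalization is: the joint law of $(Q,Y)$ factors as $\Prob{q,y} = \sum_{x_i}\Prob{x_i}\Prob{y|x_i}\Prob{q|x_i}$, so $\Prob{Q|y}$ is the image of $\Prob{X_i|y}$ under the stochastic map $\Prob{q|x_i}$, and likewise $\Prob{Q}$ is the image of $\Prob{X_i}$ under the same map. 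The monotonicity of KL divergence under a common stochastic map (the data-processing inequality for relative entropy) then gives exactly $\DKL{\Prob{Q|y}}{\Prob{Q}} \le \DKL{\Prob{X_i|y}}{\Prob{X_i}}$ for every $y$ and every $i$.

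I expect the main obstacle to be verifying carefully that the Markov condition $\info{Q}{Y|X_i}=0$ together with the structure of the optimization actually licenses writing $\Prob{q|y} = \sum_{x_i}\Prob{q|x_i}\Prob{x_i|y}$ with the \emph{same} kernel $\Prob{q|x_i}$ appearing in the unconditional expression $\Prob{q}=\sum_{x_i}\Prob{q|x_i}\Prob{x_i}$—in particular one must check that $\Prob{q|x_i,y}=\Prob{q|x_i}$ is precisely the content of conditional independence, and that no subtlety arises from $Q$ being defined only through $\Pr(Q|Y)$ rather than jointly with the $X_i$'s (the joint distribution is pinned down by $\Prob{x_1,\ldots,x_n,y}\Prob{q|y}$, and conditional independence of $Q,Y$ given $X_i$ is an extra constraint on that joint law). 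Once that bookkeeping is in place, the data-processing inequality for KL divergence is standard and the $\Prob{y}$-averaging is immediate, so the rest is routine.
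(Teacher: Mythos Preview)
Your proposal is correct and follows essentially the same route as the paper's proof: both decompose pointwise over $y\in Y$, use the feasibility constraint $\info{Q}{Y\middle|X_i}=0$ to obtain the Markov relation $\Prob{q\middle|x_i,y}=\Prob{q\middle|x_i}$, and conclude the per-$y$ inequality $\DKL{\Prob{Q\middle|y}}{\Prob{Q}}\le\DKL{\Prob{X_i\middle|y}}{\Prob{X_i}}$ via what is in effect the data-processing inequality for relative entropy. The paper's write-up is terser---it simply asserts the difference $\info{X}{y}-\info{Q}{y}\ge 0$---whereas you spell out the stochastic-kernel argument more carefully; but the underlying mechanism is identical.
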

\begin{proof}
For a given state $y \in Y$ and two arbitrary random variables $Q$ and $X$, given $\info{Q}{y|X}=\DKL{\Prob{QX|y}}{\Prob{Q|X}\Prob{X|y}}=0$, we show that, $\info{Q}{y} \leq \info{X}{y}$,

\begin{eqnarray*}
    \info{X}{y} - \info{Q}{y} &= \sum_{x \in X} \Prob{x|y} \log \frac{\Prob{x|y}}{\Prob{x}} - \sum_{q \in Q} \Prob{q|y} \log \frac{ \Prob{q|y} }{ \Prob{q} } \\
    &\geq& 0 \; .
\end{eqnarray*}

Generalizing to $n$ predictors $X_1, \ldots, X_n$, the above shows that that the maximum $\info{Q}{y}$ under constraint $\info{Q}{y|X_i}$ will always be less than $\min_{i \in \{1, \ldots, n\}} \info{X_i}{y}$, which completes the proof.
\end{proof}

\begin{lem} \label{lem:IminMone}
Measure $\opI_{\min}$ satisfies desired property Strong Monotonicity, \Mone.
\end{lem}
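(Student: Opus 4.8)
The plan is to argue directly from the closed form in \eqref{eq:Imin_DKL},
\[
\opI_{\min}\!\left( X_1, \ldots, X_n : Y \right) \;=\; \sum_{y \in Y} \Prob{y}\, \min_{i \in \{1,\ldots,n\}} \info{X_i}{Y=y},
\]
in which $\info{X_i}{Y=y} = \DKL{\Prob{X_i|y}}{\Prob{X_i}}$ is the nonnegative ``specific information'' attached to outcome $y$. The monotonicity inequality $\opI_{\min}(X_1,\ldots,X_n,Z:Y) \le \opI_{\min}(X_1,\ldots,X_n:Y)$ is immediate: for each fixed $y$ the minimum over the enlarged index set $\{X_1,\ldots,X_n,Z\}$ is at most the minimum over $\{X_1,\ldots,X_n\}$, and since $\Prob{y}\ge 0$ this survives the weighted sum over $y$.

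First I would set up the equality condition. Suppose $\info{Z \vee X_i}{Y} = \info{Z}{Y}$ for some $i$. The chain rule $\info{Z \vee X_i}{Y} = \info{Z}{Y} + \info{X_i}{Y|Z}$ forces $\info{X_i}{Y|Z}=0$, i.e.\ $X_i$ and $Y$ are conditionally independent given $Z$, so $\Prob{x_i|y,z}=\Prob{x_i|z}$ for all $x_i,y,z$. Marginalizing over $z$ then gives $\Prob{x_i|y} = \sum_z \Prob{z|y}\Prob{x_i|z}$ and $\Prob{x_i} = \sum_z \Prob{z}\Prob{x_i|z}$; that is, for \emph{every} $y$ the distributions $\Prob{X_i|y}$ and $\Prob{X_i}$ are obtained from $\Prob{Z|y}$ and $\Prob{Z}$ by pushing both through one and the same stochastic map $z \mapsto \Prob{X_i|z}$.

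The key step is then a per-outcome data-processing inequality for the Kullback--Leibler divergence: applying $\DKL{WP}{WQ}\le\DKL{P}{Q}$ with $P=\Prob{Z|y}$, $Q=\Prob{Z}$ and $W$ the channel $z\mapsto\Prob{X_i|z}$ yields
\[
\info{X_i}{Y=y} = \DKL{\Prob{X_i|y}}{\Prob{X_i}} \;\le\; \DKL{\Prob{Z|y}}{\Prob{Z}} = \info{Z}{Y=y}
\]
for every $y \in Y$. Hence $\info{Z}{Y=y} \ge \info{X_i}{Y=y} \ge \min_{j\in\{1,\ldots,n\}}\info{X_j}{Y=y}$, so adjoining the $Z$-term does not change the minimum for any $y$; multiplying by $\Prob{y}$ and summing recovers $\opI_{\min}(X_1,\ldots,X_n,Z:Y) = \opI_{\min}(X_1,\ldots,X_n:Y)$, which is the claimed equality.

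I expect the only genuine content to be the pointwise bound $\info{X_i}{Y=y}\le\info{Z}{Y=y}$; everything else is bookkeeping about minima and the chain rule. The main obstacle is therefore to recognize the mixture structure above and invoke the data-processing inequality --- or, if a self-contained argument is wanted, to re-derive it from joint convexity of $\DKL{\cdot}{\cdot}$ together with the log-sum inequality applied to the two mixture representations, which is precisely where the conditional-independence hypothesis $\Prob{x_i|y,z}=\Prob{x_i|z}$ is used. Finally, since the hypothesis of \Mzero (namely $\ent{Z \vee X_i} = \ent{Z}$, forcing $X_i$ to be a function of $Z$) implies $\info{X_i}{Y|Z}=0$, the same argument shows in passing that $\opI_{\min}$ also satisfies \Mzero.
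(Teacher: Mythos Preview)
Your proof is correct and in fact establishes the full equality clause of \Mone as stated in Section~\ref{subsec:prop}. The paper's own argument takes a different and more restrictive route: it works under the hypothesis $\ent{Y|Z}=0$ (so $Y$ is determined by $Z$), computes directly that $\info{Z}{Y=y}=\log\frac{1}{\Prob{y}}$, observes that this is the largest possible value of the specific information at outcome $y$ for any predictor, and concludes that adjoining $Z$ can never lower the pointwise minimum. That hypothesis is strictly stronger than the one in \Mone---indeed $\ent{Y|Z}=0$ forces $\info{Z \vee X_i}{Y}=\info{Z}{Y}$ for \emph{every} $i$---so the paper's argument, while shorter and entirely elementary, covers only a special case of the stated equality condition. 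Your route via the data-processing inequality for KL divergence, using the conditional-independence consequence $\info{X_i}{Y|Z}=0$ to exhibit $\Prob{X_i|y}$ and $\Prob{X_i}$ as images of $\Prob{Z|y}$ and $\Prob{Z}$ under the common channel $z\mapsto\Prob{X_i|z}$, is the natural tool for the general hypothesis and delivers the full claim.
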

\begin{proof}
Given $\ent{Y|Z}=0$, then the specific-surprise $\info{Z}{y}$ yields,

\begin{eqnarray*}
\info{Z}{y} &\equiv& \DKL{ \Prob{Z|y} }{ \Prob{Z} } \\
    &=& \sum_{z \in Z} \Prob{z|y} \log \frac{ \Prob{z|y} }{ \Prob{z} } \\
    &=& \sum_{z \in Z} \Prob{z|y} \log \frac{1}{\Prob{y}} \\
    &=& \log \frac{1}{ \Prob{y} } \; .
\end{eqnarray*}

Given that for an arbitrary random variable $X_i$, $\info{X_i}{y} \leq \log \frac{1}{\Prob{y}}$.  As $\opI_{\min}$ takes only uses the $\min_i \info{X_i}{y}$, the minimum is invariant under adding any predictor $Z$ such that $\ent{Y|Z}=0$.  Therefore, measure $\opI_{\min}$ satisfies property \Mone.
\end{proof}


\end{document}